\def\blfootnote{\xdef\@thefnmark{}\@footnotetext}
\newcommand{\aref}[1]{\hyperref[#1]{Appendix~\ref{#1}}}
\newtheorem{theorem}{Theorem}
\newtheorem{corollary}{Corollary}
\theoremstyle{definition}
\newtheorem{definition}{Definition}
\newenvironment{quotedef}[1]
  {\innerquotedef}
  {\endinnerquotedef}
\DeclareMathOperator{\img}{Img}
\DeclareMathOperator{\dom}{Dom}
\newcommand{\abs}[1]{\left\lVert#1\right\rVert}
\newcommand{\autotag}[1]{\stepcounter{equation}\tag{\theequation}\label{#1}}
\newcommand{\comp}{\mathbin{\Vert}}
\newcommand{\given}{\mathbin{\vert}}
\newcommand{\restr}[2]{{
  \left.\kern-\nulldelimiterspace 
  #1 
  \right|_{#2} 
}}
\let\citet\Citet
\let\citep\Citep
\let\citeauthor\Citeauthor
\let\cite\CITE
\begin{document}

\def\sectionautorefname{Section}
\def\subsectionautorefname{Section}
\def\subsubsectionautorefname{Section}
\def\figureautorefname{Figure}
\def\tableautorefname{Table}
\def\assumptionautorefname{Assumption}
\def\definitionautorefname{Definition}
\def\corollaryautorefname{Corollary}
\def\algorithmautorefname{Algorithm}

\newcommand{\labels}{\ensuremath{\mathcal{A}}}
\newcommand{\graphs}{\ensuremath{\mathcal{J}}}
\newcommand{\sgraphs}{\ensuremath{\mathcal{JS}}}
\newcommand{\ren}{\text{ren}}
\newcommand{\g}{\mathcal{G}}
\newcommand{\rt}{\text{rt}}
\newcommand{\type}[1]{\textsc{#1}}
\newcommand{\emptyapp}{\textsc{App}}
\newcommand{\app}[1]{\text{\emptyapp\textsubscript{\type{#1}}}}

\title{Graphs with Multiple Sources per Vertex}
\date{March 29, 2019}
\author{Martin van Harmelen\\Supervised by Jonas Groschwitz}
\maketitle

\section{Introduction}
Instead of using logical formulas to represent meaning of natural language, recent trends use Abstract Meaning Representation (AMR) instead:
a graph based solution where nodes correspond to atomic meaning units and edges specify argument position.
Several attempts have been made at constructing them compositionally,
and recently the idea of using s-graphs with the HR-algebra \citep{koller15s-graphs}
has been simplified to reduce the number of options when parsing
\citep{groschwitz17am-algebra}%
.

This apply-modify algebra (AM-algebra)
is a linguistically plausible graph algebra with two classes of operations, both of rank two:
the apply operation is used to combine a predicate with its argument;
the modify operation is used to modify a predicate.
As terms it generates annotated s-graphs (as-graphs),
which are s-graphs annotated with a more detailed type description.

While the AM-algebra correctly handles relative clauses and complex cases of coordination,
it cannot parse reflexive sentences like: ``The raven washes herself.''
that lead to AMRs resembling the ones in \autoref{fig:amrself}.
This paper proposes a change to the type system of the AM-algebra and a change to the definition of s-graphs underlying the algebra to facilitate this.

\begin{figure}[tb]
\begin{subfigure}[b]{0.5\textwidth}
\centering
\digraph[height=4cm]{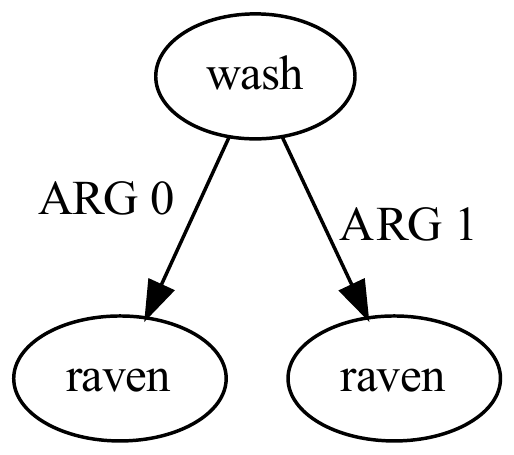}{
margin=0
    wash -> raven [xlabel="ARG 0 "]
    wash -> raven2 [label=" ARG 1"]
    raven2 [label=raven]
}
\caption{}\label{fig:amrother}
\end{subfigure}%
\begin{subfigure}[b]{0.5\textwidth}
\centering
\digraph[height=4cm]{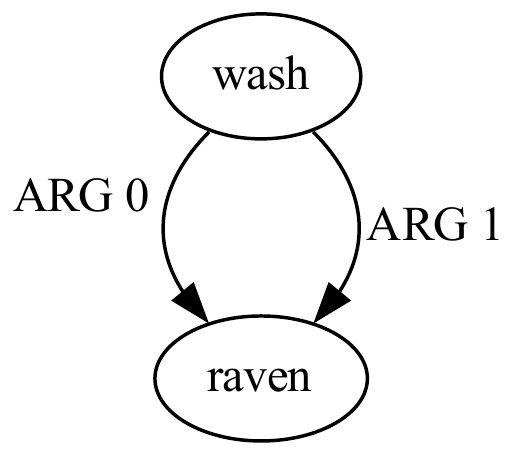}{
margin=0
    wash:sw -> raven:nw [xlabel="ARG 0 "]
    wash:se -> raven:ne [label=" ARG 1"]
}
\caption{}\label{fig:amrself}
\end{subfigure}
\caption{AMRs for ``The raven washes the raven'' (\subref{fig:amrother}) and ``The raven washes herself'' (\subref{fig:amrself})}
\end{figure}

\begin{figure}
\centering
\begin{minipage}{0.3\textwidth}
\centering
\begin{forest} sn
[ \app{s}
    [ \app{o}
        [ $G_{wash}$ ]
        [ $G_{raven}$ ]
    ]
    [ $G_{raven}$ ]
]
\end{forest}
\caption{Correct way of parsing ``The raven washes the raven''.} \label{fig:treeraven}
\end{minipage}\hfill
\begin{minipage}{0.3\textwidth}
\centering
\begin{forest} sn
[ \app{s}
    [ \app{o}
        [ $G_{wash}$ ]
        [ $G_{self}$ ]
    ]
    [ $G_{raven}$ ]
]
\end{forest}
\caption{Preferred way of parsing ``The raven washes herself''.} \label{fig:treeself}
\end{minipage}\hfill
\begin{minipage}{0.3\textwidth}
\centering
\digraph[height=2.5cm,trim=0cm 0.2cm 0cm 0.7cm]{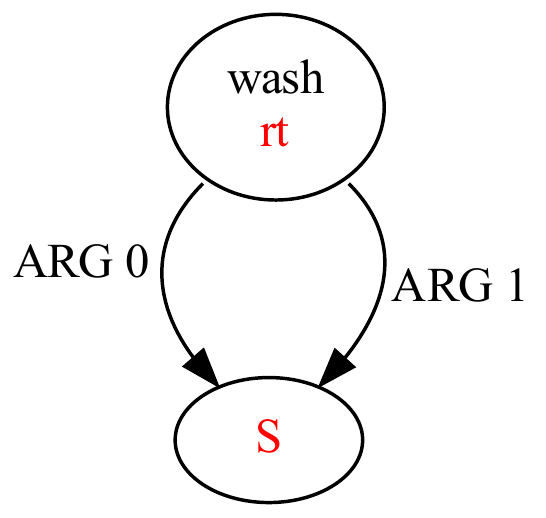}{
margin=0;
    s [label=<<FONT COLOR="red">S</FONT>>]
    wash [label=<wash<BR/><FONT COLOR="red">rt</FONT>>]
    wash:sw -> s:nw [xlabel="ARG 0 "]
    wash:se -> s:ne [label=" ARG 1"]
}
\caption{AS-graph for ``to wash oneself''.} \label{fig:amroneself}
\end{minipage}
\end{figure}

\section{The Proposal in Short} \label{sec:short}
Amongst many constructs, the AM-algebra can parse an English SVO sentence like ``The raven washes the raven'' according to the term given in \autoref{fig:treeraven} and
it should also be able to parse reflexive sentences like ``The raven washes herself'' in a similar manner, as shown in \autoref{fig:treeself}.

In this solution, the only vertex of $G_{self}$ has no label, as the necessary label is determined in the \app{s} step, but instead has an additional \type{s}-source label on its root node (\autoref{fig:selflex}).
The \app{o} operation renames \rt{} to \type{o}, composes the result with $G_{wash}$ and finally forgets the \type{o}-source label.
Because of the additional \type{s}-source label that is on the same node as the \type{o}-source label, the \type{o} and \type{s}-sources of $G_{wash}$ are merged, resulting in \autoref{fig:amroneself}.
Intuitively, the additional \type{s}-label marks that whatever position $G_{self}$ is applied to, should merge with the subject position.

The original s-graphs used in the AM-algebra, however, disallow one vertex having multiple source labels, making the lexical item $G_{self}$ needed for this derivation illegal.
Secondly, the AM-algebra demands that the type of the second argument of \app{$\alpha$} is strictly equal to the type expected by the first argument at its $\alpha$-source.
Having an extra \type{s}-label, $G_{self}$ violates this constraint.

The first part of the remainder of this paper addresses the issue regarding s-graphs, while the second part covers the type system of the AM-algebra.

\begin{figure}[tb]
\begin{subfigure}[b]{0.333\textwidth}
\centering
\digraph[height=2.5cm,trim=0cm -1.5cm 0cm -1.5cm]{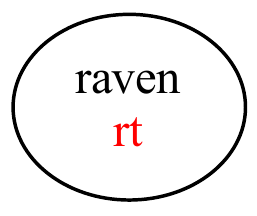}{
margin=0
    raven [label=<raven<BR/><FONT COLOR="red">rt</FONT>>]
}
\caption{$G_{raven}$}
\end{subfigure}%
\begin{subfigure}[b]{0.333\textwidth}
\centering
\digraph[height=2.5cm]{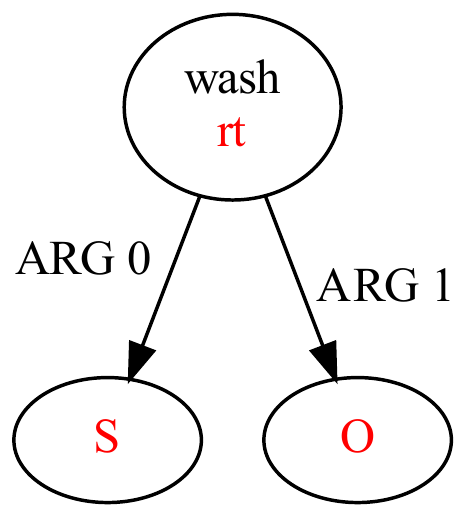}{
margin=0
    s [label=<<FONT COLOR="red">S</FONT>>]
    o [label=<<FONT COLOR="red">O</FONT>>]
    wash [label=<wash<BR/><FONT COLOR="red">rt</FONT>>]
    wash -> s [xlabel="ARG 0 "]
    wash -> o [label=" ARG 1"]
}
\caption{$G_{wash}$}
\end{subfigure}%
\begin{subfigure}[b]{0.333\textwidth}
\centering
\digraph[height=2.5cm,trim=0cm -1.5cm 0cm -1.5cm]{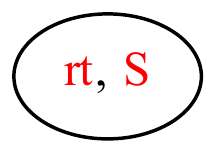}{
margin=0
    rt [label=<<FONT COLOR="red">rt</FONT>, <FONT COLOR="red">S</FONT>>]
}
\caption{$G_{self}$} \label{fig:selflex}
\end{subfigure}%
\caption{Example lexicon} \label{fig:lexicon}
\end{figure}

\section{S-graphs}
\subsection{Original Definition of S-graphs} \label{sec:original_definition}
This section reiterates the definitions by \citet{courcelle12graphs} of s-graphs and parallel-composition and concludes with a small remark.
As opposed to the more precise definition of s-graphs and parallel-composition that \citet{courcelle12graphs} give in Section~2.3, this paper works with the simpler one from Section~1.4.2:

\begin{quote}
We consider (abstract) directed or undirected graphs, possibly with multiple edges.
They form the set $\graphs$.
For a graph in $\graphs$, $E_G$ denotes its set of edges (and $V_G$ its set of vertices).
We let $\labels$ be a countable set of labels (\dots) that will be used to distinguish particular vertices.
These distinguished vertices will be called \emph{sources}, and $\labels$ is the set of \emph{source labels}.

(This notion of source is unrelated with edge directions.)

A \emph{graph with sources}, or \emph{s-graph} in short, is a pair $G = \langle G^\circ, src_G \rangle$ where $G^\circ \in \graphs$ and $src_G$ is a bijection from a finite subset $\tau(G)$ of $\labels$ to a subset of $V_{G^\circ}$.
We call $\tau(G)$ the \emph{type} of $G$ and $src_G(\tau(G))$ the set of its sources.
The vertex $src_G(a)$ is called the \emph{a-source} of $G$; its \emph{source label}, also called its \emph{source name}, is $a$.

We let $\sgraphs$ denote the set of s-graphs; (\dots) We define operations on $\sgraphs$: first a binary operation called the \emph{parallel-composition}, (\dots).
For $G, H \in \sgraphs$ we let
\[
    G \comp H := \langle G^\circ \cup H'^\circ, src_G \cup src_{H'}\rangle
\]
where $H'$ is isomorphic to $H$ and is such that
\begin{align*}
    &E_{H'} \cap E_G = \emptyset \\
    &src_{H'}(a) = src_G(a) \text{ if } a \in \tau(G) \cap \tau(H'), \\
    &V_{H'} \cup V_G = \{src_G(a) \given a \in \tau(G) \cap \tau(H') \}.
\end{align*}
This operation ``glues'' $G$ and a disjoint copy of $H$ by fusing their sources having the same names.
\end{quote}

Note that \citet{courcelle12graphs} define s-graphs with $src_G$ a bijection.
This is somewhat misleading, as ``a bijection (\dots) to a subset of $V_{G^\circ}$'' is exactly the same as saying that $src_G$ is injective to $V_{G^\circ}$.
The subset to which $src_G$ is then bijective is trivially $\img(src_G)$.

\subsection{New Graphs with Sources} \label{sec:new_definition}
To facilitate the preferred parsing method described in \autoref{sec:short},
this section gives a definition that is very similar to the one of s-graphs by \citet{courcelle12graphs}, but allows for one vertex to have multiple source labels.
Secondly, a new definition of parallel-composition is given that is equivalent to the old one when composing s-graphs, but also correctly handles the composition of graphs with sources that have more than one label.
Finally, a proof is given for this equivalence of definitions.

\subsubsection{Defining MS-graphs}
\begin{definition}[Graphs with possibly multiple source labels per vertex]
Let $\labels$ be a fixed countable set of \emph{names} or \emph{labels}.
Let $\tau(G) \subseteq \labels$, the \emph{type} of $G$, be a finite subset of $\labels$, denoting the labels used in $G$.
Instead of $src_G$ being an \emph{injective} function (as in the original definition), let $src_G: \tau(G) \to V_{G^\circ}$ be any function.
A \emph{graph with possibly multiple source labels per vertex} (\emph{ms-graph}) is a pair $G = \langle G^\circ, src_G \rangle$.
\end{definition}

This definition allows for one vertex to have multiple labels, but crucially a label still uniquely specifies a vertex.
Moreover, if $src_G$ happens to be injective, this definition of ms-graphs reduces to the definition of s-graphs, thus all s-graphs are ms-graphs.

Finally, let $slab_G\colon \img(src_G) \to \mathcal{P}\left(\tau(G)\right)$ be the inverse of $src_G$.
If $S$ is a set, we write $Slab_G(S) := \bigcup \{ slab_G(s) \given s \in S \cap \dom(slab_G) \}$.

\subsubsection{Redefining parallel-composition}
This section shows the definition by \citet{courcelle12graphs} of parallel-composition does not work for ms-graphs and gives a new definition.

The example in \autoref{fig:compprob} together with the following corollary shows the original definition of parallel-composition is contradictory when used on ms-graphs.

\begin{corollary}\label{cor:disjoint}
Let $G$ and $H$ be s-graphs, $\abs{\tau(G)} \geq 1$ and, without loss of generality, let $a, b \in \tau(G)$.
If $src_G(a) \neq src_G(b)$%
, then $src_{G \comp H}(a) \neq src_{G \comp H}(b)$,
because $\restr{src_{G \comp H}}{\tau(G)} = src_G$.
\end{corollary}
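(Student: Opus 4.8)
The plan is to reduce the entire statement to the single restriction identity $\restr{src_{G \comp H}}{\tau(G)} = src_G$, which the corollary itself cites as its justification; once that identity is established, the distinctness claim is immediate. So the real work is to verify the identity directly from the definition of parallel-composition, and everything else is mechanical.

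First I would unfold the definition $G \comp H = \langle G^\circ \cup H'^\circ, src_G \cup src_{H'}\rangle$, where $H'$ is the disjoint isomorphic copy of $H$ satisfying the three gluing conditions. The crucial one is the second, $src_{H'}(a) = src_G(a)$ whenever $a \in \tau(G) \cap \tau(H')$: this is exactly what guarantees that the union $src_G \cup src_{H'}$ of the two partial functions is single-valued on the overlap $\tau(G) \cap \tau(H')$ of their domains, hence a genuine function at all.

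Next I would fix an arbitrary $a \in \tau(G)$ and split on whether $a \in \tau(H')$. If $a \notin \tau(H')$, then only $src_G$ is defined at $a$ and the union returns $src_G(a)$. If $a \in \tau(G) \cap \tau(H')$, both are defined at $a$, but the gluing condition forces $src_{H'}(a) = src_G(a)$, so the union again returns $src_G(a)$. In both cases $src_{G \comp H}(a) = src_G(a)$, which establishes $\restr{src_{G \comp H}}{\tau(G)} = src_G$. The conclusion then follows by evaluating the restriction at $a$ and $b$: from $src_G(a) \neq src_G(b)$ we read off $src_{G \comp H}(a) = src_G(a) \neq src_G(b) = src_{G \comp H}(b)$.

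I do not expect a genuine obstacle here; the content is bookkeeping about the union-of-functions semantics and the agreement condition. The one point worth flagging is that this argument never invokes injectivity of $src_G$ — it uses only that $src_G$ is a function — so the identical restriction identity, and hence the same preservation of the equality/inequality pattern among sources, holds verbatim for ms-graphs. That is precisely what makes the corollary, together with the example in \autoref{fig:compprob}, fatal for the original composition on ms-graphs: there the copy $H'$ must simultaneously respect its own (non-injective) source assignment and match $src_G$ on shared labels, and these two demands become mutually contradictory.
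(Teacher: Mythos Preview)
Your proposal is correct and follows exactly the line the paper takes: the paper states the corollary with its justification inline (the restriction identity $\restr{src_{G \comp H}}{\tau(G)} = src_G$) and gives no further proof, so your detailed verification of that identity from the union-of-functions definition and the gluing condition is a faithful expansion of the paper's one-line argument. Your closing observation that injectivity is never used, and hence the corollary already applies to ms-graphs and collides with \autoref{fig:compprob}, is precisely the point the paper makes immediately after the corollary.
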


Crucially, \autoref{cor:disjoint} states that all sources that were distinct within $G$ are still distinct in $G \comp H$.
This poses a problem for composing the graphs in \autoref{fig:compprob}, as $G_{separate} \comp G_{single}$ should be isomorphic to $G_{single}$.
By \autoref{cor:disjoint}, however, it must have at least as many nodes as $G_{separate}$, which is a contradiction.

\begin{figure}[tb]
\begin{subfigure}[b]{0.5\textwidth}
\centering
\digraph[width=\textwidth]{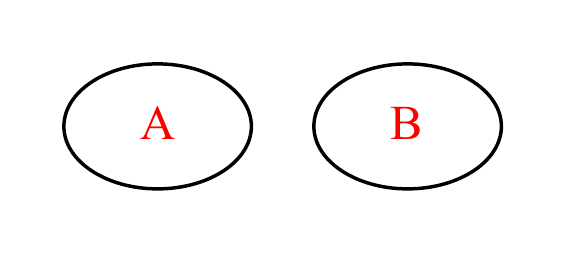}{
margin=0.2
    a [label=<<FONT COLOR="red">A</FONT>>]
    b [label=<<FONT COLOR="red">B</FONT>>]
}
\caption{$G_{separate}$}
\end{subfigure}%
\begin{subfigure}[b]{0.5\textwidth}
\centering
\digraph[width=\textwidth]{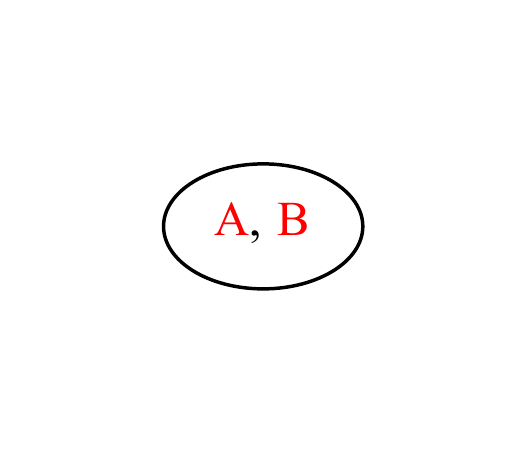}{
margin=0.6
    rt [label=<<FONT COLOR="red">A</FONT>, <FONT COLOR="red">B</FONT>>]
}
\caption{$G_{single}$}
\end{subfigure}%
\caption{Two graphs that would pose a problem when composed using the original definition of parallel-composition.} \label{fig:compprob}
\end{figure}

For the more robust definition of parallel-composition of $G$ and $H$, an equivalence relation on the disjoint union of the vertices of $G$ and $H$ is needed that makes sources with overlapping labels equivalent.

\begin{definition} \label{def:eq}
Let $G$ and $H$ be any two ms-graphs and let $H'$ be isomorphic to $H$ such that $H'$ shares no vertices or edges with $G$.
Let $V_\sim := V_G \cup V_{H'}$ and let $R$ be a binary relation on $V_\sim$ such that $\forall g \in V_G, h \in V_{H'}$:
\begin{equation*}
    (g, h) \in R \iff \exists \alpha \in \labels \colon g = src_G(\alpha) \land h = src_{H'}(\alpha)
\end{equation*}
Let $\sim_{G \comp H}$ (or $\sim$ if $G$ and $H$ are understood from context) be the symmetric and transitive closure of $R$ and additionally, for all $v \in V_\sim$: let $(v, v) \in \ \sim$.
\end{definition}

Note that $\sim$ is an equivalence relation on $V_\sim$.

\begin{definition}[Parallel-composition of ms-graphs] \label{def:comp}
Let $G$, $H$, $H'$, $V_\sim$ and $\sim$ be as in \autoref{def:eq}.
Let $X \subseteq V_\sim$ be a cross-section of $\sim$ as in the definition of quotient graphs by \citet[Definition~2.15]{courcelle12graphs}, which will also be used here.
Furthermore, let:
\begin{align*}
    \tau(G \comp H) &:= \tau(G) \cup \tau(H)\footnotemark \\
    [x] &:= \{v \in V_\sim \given v \sim x \} & \forall x \in X
\end{align*}
\footnotetext{Note that isomorphic ms-graphs have the same type.}
$slab_{G \comp H}$:
\begin{align*}
        X &\to \mathcal{P}\left(\tau(G \comp H)\right) \\
        x &\mapsto Slab_G([x]) \cup Slab_{H'}([x])
\end{align*}
$src_{G \comp H}$:
\begin{align*}
        \tau(G \comp H) &\to X \\
        \alpha &\mapsto \begin{cases}
        x &\text{if }
            \alpha \in slab_{G \comp H}(x)  \\
        \text{undefined} 
    \end{cases}
\end{align*}
Then the parallel-composition of $G$ and $H$ is:
\begin{equation*}
    G \comp H := \langle \left(\left( G^\circ \cup H'^\circ \right) / \sim\right)_X, src_{G \comp H} \rangle
\end{equation*}
\end{definition}

The following theorem suffices to show this definition of parallel-composition is well-defined.
\begin{theorem}
$src_{G \comp H}$ is well-defined.
\end{theorem}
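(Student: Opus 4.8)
The plan is to show that the cases expression defining $src_{G \comp H}$ selects, for each $\alpha \in \tau(G \comp H)$, exactly one element of $X$; this is precisely what "well-defined" demands. Since $X$ is a cross-section of $\sim$, every equivalence class meets $X$ in exactly one point, so it suffices to prove that the set $\{x \in X \given \alpha \in slab_{G \comp H}(x)\}$ is the intersection with $X$ of a single $\sim$-class.

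First I would unfold the membership condition $\alpha \in slab_{G \comp H}(x)$. By definition $slab_{G \comp H}(x) = Slab_G([x]) \cup Slab_{H'}([x])$, and since $slab_G$ is the inverse of $src_G$, we have $\alpha \in slab_G(s) \iff src_G(\alpha) = s$. Hence $\alpha \in Slab_G([x])$ iff $\alpha \in \tau(G)$ and $src_G(\alpha) \in [x]$, i.e.\ $src_G(\alpha) \sim x$; symmetrically $\alpha \in Slab_{H'}([x])$ iff $\alpha \in \tau(H')$ and $src_{H'}(\alpha) \sim x$. Thus $\alpha \in slab_{G \comp H}(x)$ holds exactly when $x$ lies in the $\sim$-class of $src_G(\alpha)$ (if $\alpha \in \tau(G)$) or in that of $src_{H'}(\alpha)$ (if $\alpha \in \tau(H')$).

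The key step is to verify these two classes coincide whenever both are defined, that is whenever $\alpha \in \tau(G) \cap \tau(H')$. This is immediate from the relation $R$ of \autoref{def:eq}: by construction $(src_G(\alpha), src_{H'}(\alpha)) \in R \subseteq\ \sim$, so $src_G(\alpha) \sim src_{H'}(\alpha)$ and therefore $[src_G(\alpha)] = [src_{H'}(\alpha)]$. Consequently there is a single well-defined class $C_\alpha$ — namely $[src_G(\alpha)]$ when $\alpha \in \tau(G)$ and $[src_{H'}(\alpha)]$ otherwise — for which $\alpha \in slab_{G \comp H}(x) \iff x \in C_\alpha$ for every $x \in X$. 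Because $X$ is a cross-section, $C_\alpha \cap X$ is a singleton, which yields both the existence and the uniqueness of the selected $x$; so the first branch of the cases expression fires for exactly one $x \in X$, and $src_{G \comp H}$ is a well-defined (indeed total) function on $\tau(G \comp H) = \tau(G) \cup \tau(H)$.

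I expect the only genuine subtlety to be the bookkeeping around the partial maps $slab_G$, $slab_{H'}$ and the disjointness of $V_G$ and $V_{H'}$: one must make sure that an $\alpha \in \tau(G) \cap \tau(H')$ really does contribute a vertex from each side and that $R$ links them, so that they do not land in distinct classes. Once the inverse relationship $\alpha \in slab_G(s) \iff src_G(\alpha) = s$ is spelled out, the remainder is a direct appeal to the cross-section property of $X$.
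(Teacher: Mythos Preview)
Your proof is correct and follows essentially the same route as the paper: both arguments reduce the question to showing that any two $x,y\in X$ with $\alpha\in slab_{G\comp H}(x)$ and $\alpha\in slab_{G\comp H}(y)$ are $\sim$-equivalent, using that the only possible witnesses are $src_G(\alpha)$ and $src_{H'}(\alpha)$, which are $R$-related by \autoref{def:eq}, and then invoke that $X$ is a cross-section. Your version is in fact slightly more complete than the paper's, since you also note existence (totality on $\tau(G\comp H)$), whereas the paper only argues uniqueness and leaves the ``undefined'' branch tacitly vacuous.
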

\begin{proof}
Let $x, y \in X$ and $\alpha \in \tau(G \comp H)$.
If $\alpha \in slab_{G \comp H}(x)$ and $\alpha \in slab_{G \comp H}(y)$,
then there exist $u, v \in V_\sim$ such that $u \sim x$, $v \sim y$ and one of the following two cases holds:
\begin{enumerate}
    \item $slab_G(u) \ni \alpha \in slab_G(v)$ or $slab_{H'}(u) \ni \alpha \in slab_{H'}(v)$
    \item $slab_G(u) \ni \alpha \in slab_{H'}(v)$ or $slab_{H'}(u) \ni \alpha \in slab_{G}(v)$
\end{enumerate}

The first case implies that either $u = src_G(\alpha) = v$ or $u = src_{H'}(\alpha) = v$, thus by transitivity of $\sim$ we have $x \sim y$.

Without loss of generality, let us assume $u \in G$ and $v \in H'$ in the second case.
Then $u = src_G(\alpha)$ and $v = src_{H'}(\alpha)$, thus by \autoref{def:eq} $u \sim v$ and by transitivity of $\sim$ we have $x \sim y$.

$X$, however, is a cross-section of $\sim$, thus $x \sim y$ implies $x = y$.
\end{proof}

Note that, like the definition of parallel-composition by \citet{courcelle12graphs}, \autoref{def:comp} only determines the parallel-composition of two ms-graphs up to isomorphism.

\subsection{Proof of Equivalence of Definitions}
The following shows that \autoref{def:comp} reduces to the original definition of parallel-composition when composing s-graphs.
\begin{proof}
Let $G$, $H$, $H'$, $V_\sim$ and $\sim$ be as in \autoref{def:eq} and specifically, let $G$ and $H$ be s-graphs.
In this proof $A \cong B$ denotes that $A$ is isomorphic to $B$ and $G \comp H$ refers to parallel-composition as in \autoref{def:comp}.
Finally, let $H''$ be what \citeauthor{courcelle12graphs} denote by $H'$.

What must be shown is that
\begin{equation*}
    \left(\left( G^\circ \cup H'^\circ \right) / \sim\right)_X \cong G^\circ \cup H''^\circ
\end{equation*}
with some isomorphism $c$, such that
\begin{equation*}
    c \circ src_{G \comp H} = src_G \cup src_{H''}.
\end{equation*}

By definition of s-graphs,
every vertex of $G$ and $H'$ has at most one source
and thus for every source in $G$ there is at most one equivalent vertex,
which is the vertex of $H'$ with the same source label, if it exists.
Thus $\forall g, g' \in V_{G}$:
\begin{equation} \label{eq:G}
    g \sim g' \iff g = g',
\end{equation}
similarly $\forall h, h' \in V_{H'}$:
\begin{equation} \label{eq:H}
    h \sim h' \iff h = h'
\end{equation}
and finally $\forall g \in V_G, h \in V_{H'}$:
\begin{equation} \label{eq:GH}
    g \sim h \iff \exists \alpha \in \labels \colon g = src_G(\alpha) \land h = src_{H'}(\alpha).
\end{equation}

Let $K^\circ$ be a subgraph of $\left(\left( G^\circ \cup H'^\circ \right) / \sim\right)_X$ (or $(G \comp H)^\circ$) such that
\begin{align*}
    V_K &:= \{x \in X \given \exists h \in V_{H'} \colon h \sim x\} \\
    E_K &:= E_{H'} \\
    vert_K(e) &:= (x, y) &\forall e \in E_K
\end{align*}
with $x, y \in X$, $x \sim h$, $y \sim h'$ and $vert_{H'}(e) = (h, h')$\footnote{Such $x$ and $y$ exist and are unique, because $X$ is a cross-section of $V_\sim$.}.

In other words, $K^\circ$ is the part of $(G \comp H)^\circ$ that came from $H'$.
Let $src_K := \restr{src_{G \comp H}}{\tau(H')}$ and let $K := \langle K^\circ, src_K \rangle$.
\autoref{eq:G} and \ref{eq:H} and \ref{eq:GH} together imply $\abs{V_K} = \abs{V_{H'}}$ and thus by construction
$K \cong H'$.

The correct choice of $X$ makes $K$
satisfy all requirements of $H''$ and thus $H''$ can be chosen equal to it, making $c$ equal to the identity,
proving the equivalence of definitions.
The remainder of this proof describes this choice of $X$, verifies that $K$ then satisfies the requirements for $H''$ and finally formally checks equality.

By \autoref{eq:G}, we can choose $X$ such that $V_G \subseteq X$
and therefore
$slab_{G \comp H}(g)$ is defined for all $g \in V_G$.
By \autoref{eq:GH}%
, we have $\forall g \in V_G$:
\begin{equation*}
    Slab_G([g]) \supseteq Slab_{H'}([g])
\end{equation*}
and thus
\begin{align*}
    slab_{G \comp H}(g) &= Slab_G([g]) \cup Slab_{H'}([g])
    & \forall g \in V_G \\
    &= Slab_G([g]) \\
    &= slab_G(g) & \text{by \autoref{eq:G}.}
\end{align*}
Therefore, $\forall \alpha \in \tau(G), g \in V_G$:
\begin{align*}
    \alpha \in slab_{G \comp H}(g) &\implies src_{G \comp H}(\alpha) = g
    &\text{by \autoref{def:comp}}\\
    \alpha \in slab_{G}(g) &\implies src_{G \comp H}(\alpha) = g \\
    src_G(\alpha) = g &\implies src_{G \comp H}(\alpha) = g \\
    \restr{src_{G \comp H}}{\tau(G)} &= src_G. \autotag{eq:restrG}
\end{align*}

This paragraph checks $K$ is suitable as $H''$.
By choice of $H'$, we have
\begin{equation*}
    E_{K} \cap E_G = \emptyset.
\end{equation*}
By definition of $K$ and \autoref{eq:restrG}, we have $\forall \alpha \in \tau(G) \cap \tau(K)$:
\begin{equation*}
    src_K(\alpha) = src_{G \comp H}(\alpha) = src_G(\alpha).
\end{equation*}
For the last condition,
\begin{equation*}
    V_G \cap V_K = \{ g \in V_G \given \exists h \in V_{H'} \colon h \sim g \}
\end{equation*}
by definition of $V_K$.
\autoref{eq:GH} implies $\forall g \in V_G$:
\begin{align*}
    g \in V_G \cap V_K
    &\iff
    \exists h \in V_{H'}, \alpha \in \labels \colon g = src_G(\alpha) \land src_{H'}(\alpha) = h
\end{align*}
thus
\begin{align*}
    V_G \cap V_K &= \{ src_G(\alpha) \given \alpha \in \tau(G) \land \left[\exists h \in V_{H'} \colon src_{H'}(\alpha) = h\right] \} \\
    V_G \cap V_K &= \{ src_G(\alpha) \given \alpha \in \tau(G) \cap \tau(H') \} \\
    V_G \cap V_K &= \{ src_G(\alpha) \given \alpha \in \tau(G) \cap \tau(K) \}
\end{align*}
for $\tau(K) = \tau(H')$.

Finally, let us check $(G \comp H)^\circ = G^\circ \cup K^\circ$ and $src_{G\comp H} = src_G \cup src_K$.
$K^\circ \subseteq (G \comp H)^\circ$ by choice of $K^\circ$ and similarly $G^\circ \subseteq (G \comp H)^\circ$ by choice of $X$, thus $(G \comp H)^\circ \supseteq G^\circ \cup K^\circ$.
For the inverse, we must check that $\forall x \in X$:
\begin{equation} \label{eq:supset}
    x \not \in V_G \to x \in V_k.
\end{equation}
$X$ is a cross-section of $\sim$, thus it is a subset of $V_{\sim} = V_G \cup V_{H'}$, thus $\forall x \in X \colon x \in V_G \lor x \in V_{H'}$.
If $x \in V_G$, then \autoref{eq:supset} holds.
If $x \not \in V_G$, then $x \in V_{H'}$ and therefore $\exists h \in V_{H'} \colon h \sim x$, namely $h = x$ and thus $x \in V_{K}$,
proving that $X = V_G \cup V_K$.

For the edges:
\begin{align*}
    E_{(G \comp H)^\circ} &= E_{G^\circ \cup H'^\circ}
        & \text{by definition of quotient graphs} \\
        &= E_{G} \cup E_{H'} \\
        &= E_{G} \cup E_{K}
        & \text{by choice of } E_K.
\end{align*}

Verifying $vert_{(G \comp H)^\circ} = vert_G \cup vert_{K}$ it too tedious for the current proof,
but boils down to showing that for all $e \in E_{H'}$ either $vert_{H'}(e) = vert_K(e)$ or one or both of the vertex instances of $vert_{H'}(e)$ are not in $X$, but then they are replaced with their unique equivalent $x \in X$, exactly as in the definition of $vert_K$.

This shows that also $(G \comp H)^\circ \subseteq G^\circ \cup K^\circ$ and thus $(G \comp H)^\circ = G^\circ \cup K^\circ$.
Because of this equality and because $\restr{src_{G \comp H}}{\tau(K)} = src_K$ by definition and $\restr{src_{G \comp H}}{\tau(G)} = src_G$ (\autoref{eq:restrG}), we also have $src_{G \comp H} = src_G \cup src_K$.
\end{proof}

\section{Type-System of the AM-algebra}

A graph type of an as-graph $g$ annotates each source label $\alpha$ with the type of the graph that can be used as argument to $\app{$\alpha$}\left(g, - \right)$ \citep[Definition~3.1]{groschwitz17am-algebra}.
Note that using ms-graphs allows multiple and thus differing type restrictions
when apply is used on a source node,
but because the desired source name must be specified during application,
this does not lead to contradictions.

What \emph{is} problematic in combining $G_{wash}$ with $G_{self}$ as shown in \autoref{fig:treeself},
is that the graph type of $G_{wash}$ expects $G_{self}$ to have empty type, but $G_{self}$ is of type \type{s},
rendering $\app{s}\left(G_{wash}, G_{self}\right)$ undefined,
according to Definition~3.3 of \citet{groschwitz17am-algebra}, cited below.
This section first recounts said Definition~3.3, lists the problems that arise when relaxing it and finally proposes changes that take these problems into account.

\subsection{Original Definition of Apply Operation}
\begin{quote}
\begin{quotedef}{3.3}[Apply operation (\emptyapp)]
\ \\
Let $\g_1 = \left(\left(g_1, S_1\right), \left(T_1, R_1\right)\right)$, $\g_2 = \left(\left(g_2, S_2\right), \left(T_2, R_2\right)\right)$ be as-graphs.
Then we let $\app{$\alpha$}\left(\g_1, \g_2\right) = \left(\left(g', S'\right), \left(T', R'\right)\right)$ such that
\begin{align*}
    (g', S') &= f_\alpha( (g_1, S_1) \comp \ren_{\{\rt \mapsto \alpha \}}(\ren_{R(\alpha)}( (g_2, S_2)))) \\
    T' &= (T_1 \setminus \{\alpha\}) \cup (T_2 \circ \overline{R_1 (\alpha)^{-1}})\\
    R' &= (R_1 \setminus \{\alpha\}) \cup (R_2 \circ \overline{R_1 (\alpha)^{-1}})
\end{align*}
if and only if
\begin{enumerate}
    \item $\g_1$ actually has an $\alpha$-source to fill, i.e. $\alpha \in \dom(T_1)$
    \item \label{cond:original} $\g_2$ has the type $\alpha$ is looking for, i.e. $T_1(\alpha) = (T_2, R_2)$, and
    \item $T', R'$ are well-defined (partial) functions;
\end{enumerate}
otherwise $\app{$\alpha$}(\g_1, \g_2)$ is undefined.
\end{quotedef}
\end{quote}

\subsection{Problems}
The only place where this definition has to be relaxed, is in Condition~\ref{cond:original}.
To keep the type system functional, this relaxation should not be too big:
the type of the output graph using the original definition must not be violated.

Simply allowing $\g_2$ to have any extra annotation that $\g_1$ already contains (conform the modify operation \citep[Definition~3.4]{groschwitz17am-algebra}) allows the term in \autoref{fig:treeself} and does not change the output type.
At first this seems fine, but there are three problems:
\begin{enumerate}
    \item \label{prob:wash} This would also allow \type{s}-application of $G_{wash}$ to itself, which is definitely not desirable,
    as intuitively transitive verbs should only be allowed to combine with entities and not with other verbs.
    \item \label{prob:self} Moreover, this would allow \type{s}-application of $G_{self}$ to something, which is undesirable because
    the \type{s}-label in $G_{self}$ is there only to signify a merge with the subject of the sentence-to-be, not to label a to-be-filled argument slot.
    \item \label{prob:app} Finally, this would allow something to be \type{s}-applied to $G_{self}$, making the extra \type{s}-label redundant.
    This \emph{could} be undesirable, as the spirit of the AM-algebra is to reduce the number of terms leading to the same AMR\footnote{
    This all hinges around whether $ren_{\{\text{rt} \mapsto \alpha \}}(g)$ is defined if $\alpha \in \tau(g)$.
    For if it is not defined,
    then the AM-algebra implicitly disallows $\alpha$-application of something to a graph with an $\alpha$-source that is not renamed,
    which indicates Problem~\ref{prob:app} is indeed undesirable,
    but also implicitly disallowed.
    \citeauthor{courcelle12graphs}, however, only define a simultaneous rename operation where labels \emph{swap} position,
    which would lead to weird but not explicitly prevented results.
    \citeauthor{groschwitz17am-algebra} do not bother specifying whether their rename works the same.
    }.
\end{enumerate}

All three problems signify that any additional labels of the root node should be treated differently than the source labels of other nodes.
This does not only involve relaxing Condition~\ref{cond:original}, but also requires adding conditions.

\subsection{Proposed Changes}
For brevity, let $rlab(G) := slab_{G}\left(src_{G}(\rt)\right) \setminus \{\rt\}$ for any ms-graph, the set of \emph{additional root labels} of $G$.
Problem~\ref{prob:self}
is
simply solved by adding the following condition:
\begin{enumerate}
    \setcounter{enumi}{3}
    \item \label{cond:notaddroot} $\alpha \notin rlab(\g_1)$, i.e. $\alpha$ is not an additional root label of $\g_1$,
\end{enumerate}
or instead, if Problem~\ref{prob:app} must be accounted for explicitly:
\begin{enumerate}
    \setcounter{enumi}{3}
    \item $\alpha \notin rlab(\g_1) \cup rlab(\g_2)$, i.e. $\alpha$ is an additional root label of neither $\g_1$ nor $\g_2$.
\end{enumerate}

Problem~\ref{prob:wash} can be solved by making sure not to relax Condition~\ref{cond:original} too much.
A safe choice would therefore be to relax Condition~\ref{cond:original} as little as possible, while still reaching our goal.

The minimum relaxation needed for the term in \autoref{fig:treeself} to be legal, would be to allow at most one additional \type{s}-label at the root.
This, however, would be tailored too much to the specific choice of label and number of additional labels.
Instead, the minimum relaxation that does not specify the (number of) additional root labels, is the replacement of Condition~\ref{cond:original} by the following conditions:
\begin{enumerate}[2a]
    \item \label{cond:ignoretype} $T_1(\alpha) = \left(T_2 \setminus rlab(g_2), R_2 \setminus rlab(g_2)\right)$, i.e. apart from its additional root labels, $\g_2$ has the type $\alpha$ is looking for,
    \item \label{cond:subtype} $\restr{T_2}{rlab(g_2)} \subseteq T_1$ and $\restr{R_2}{rlab(g_2)} \subseteq R_1$, i.e. the additional root labels of $\g_2$ do not change the type of $\g_1$.
\end{enumerate}

This changed \emptyapp{}  operation is trivially equivalent to the original one when only using s-graphs:
for any s-graph $g$, $rlab(g) = \emptyset$, by definition of s-graphs.
Thus Condition~\ref{cond:notaddroot} and Condition~\ref{cond:subtype} become tautologies%
,
and Condition~\ref{cond:ignoretype} reduces to Condition~\ref{cond:original} of the original definition.

\section{Conclusion}
The AM-algebra by \citet{groschwitz17am-algebra} is powerful and linguistically plausible,
but not powerful enough to parse reflexive sentences in a linguistically preferred way.
In summary, this paper proposed a change to the AM-algebra and the s-graphs underlying it,
and showed the proposed definitions reduce to the original definitions when used under the original constraints.
Most importantly, these changes enable the AM-algebra to parse reflexive sentences in a linguistically preferred way.

\bibliographystyle{plainnat}
\bibliography{bibliography}

\end{document}